\let\proof\@undefined
\let\endproof\@undefined
\newtheorem{proposition}{Proposition}
\newtheorem{theorem}{Theorem}
\newcolumntype{M}[1]{>{\centering\arraybackslash}m{#1}}
\newcounter{labelnote}
\let\oldmarginnote\marginnote
\renewcommand*{\marginnote}[1]{%
 \begingroup\strut
  \stepcounter{labelnote}\zsaveposx {marginnote-\thelabelnote}
     \ifnum 0\zposx{marginnote-\thelabelnote}<200000
      \reversemarginpar
      \oldmarginnote{\color{blue}#1}%
     \else
      \normalmarginpar
      \oldmarginnote{\color{blue}#1}%
     \fi
 \endgroup%
}
\title{\LARGE \bf
Distributionally Robust Differential Dynamic Programming with Wasserstein Distance\thanks{This work was supported in part by the National Research Foundation of Korea under MSIT2020R1C1C1009766,  the Information and Communications Technology Planning and Evaluation under Grants MSIT2020-0-00857, MSIT2022-0-00124, MSIT2022-0-00480, and  Samsung Electronics.}
}
\author{Astghik Hakobyan \and Insoon Yang
\thanks{A. Hakobyan, and I. Yang are with the Department of Electrical and Computer Engineering and ASRI, Seoul National University, Seoul, 08826, Korea {\tt\small \{astghikhakobyan, insoonyang\}@snu.ac.kr}}%
}
\begin{document}

\maketitle
\thispagestyle{empty}
\pagestyle{empty}


\begin{abstract}
Differential dynamic programming (DDP) is a popular technique for solving nonlinear optimal control problems with locally quadratic approximations. However, existing DDP methods are not designed for stochastic systems with unknown disturbance distributions. To address this limitation, we propose a novel DDP method that approximately solves the Wasserstein distributionally robust control (WDRC) problem, where the true disturbance distribution is unknown but a disturbance sample dataset is given.  Our approach aims to develop a practical and computationally efficient DDP solution.
To achieve this, we use the Kantrovich duality principle to decompose the  value function in a novel way and
derive closed-form expressions of the distributionally robust control and worst-case  distribution policies to be used in each iteration of our DDP algorithm.
This characterization makes our method tractable and scalable without the need for numerically solving any minimax optimization problems.
The superior out-of-sample performance and scalability of our algorithm are demonstrated through  kinematic car navigation and coupled oscillator problems. 
\end{abstract}


\section{Introduction}\label{sec:intro}

Nonlinear optimal control problems are difficult to solve exactly, particularly when the state space dimension is high. 
Differential dynamic programming (DDP) alleviates this issue using locally-quadratic approximations of the system dynamics and cost function~\cite{liao1991convergence, tassa2014control, pavlov2021interior, jallet2022implicit, so2022maximum, roulet2022iterative}.
It efficiently computes an approximate solution with superior scalability compared to the standard dynamic programming (DP) approach.  
However,  it is generally challenging to apply DDP to systems with  random disturbances without any means to counteract them.

Although various  works have extended  DDP  to handle stochastic systems, existing methods often rely on either the ground truth or potentially inaccurate approximate probability distributions of disturbances. For example, the DDP algorithms introduced in~\cite{todorov2005generalized,theodorou2010stochastic, pan2015data, pan2018efficient} either consider Gaussian multiplicative noise or model the uncertain system dynamics as Gaussian processes. 
 Another line of research is devoted to the minimax formulation of the DDP problem (e.g.,~\cite{sun2018min,morimoto2003minimax}), where the optimal control problem is solved in the face of the worst-case disturbances. However, such methods often lead to overly conservative solutions.

To address the limitations of stochastic DDP methods and handle systems with unknown disturbance distributions, we propose a novel approach inspired by distributionally robust control (DRC).
The objective of DRC is to design control policies that maximize the worst-case performance over a set of candidate distributions without assuming a specific distribution of disturbances. 
Several techniques have been proposed for hedging against distributional uncertainties in DRC problems, including moment-based and statistical distance-based approaches~\cite{Yang2018, Schuurmans2020, Mark2021,  coulson2021distributionally,Zolanvari2021, hakobyan2021wasserstein, Dixit2022,Micheli2022}. 
While moment-based approaches rely on accurate moment estimates and may not effectively capture the full distributional information about the uncertainties, distance-based methods consider distributions that are close to a given nominal one in terms of a statistical distance measure. Many recent works have focused on Wasserstein DRC (WDRC)~\cite{yang2020wasserstein, zhong2021data, kordabad2022safe,Hakobyan2022conf, kim2022minimax}, where the ambiguity set is designed as a statistical ball with the distance between two distributions measured by the Wasserstein metric. The Wasserstein ambiguity set has  salient features, including a  finite-sample performance guarantee and the ability to avoid pathological solutions to distributionally robust optimization (DRO) problems~\cite{mohajerin2018data, gao2022distributionally,boskos2020data}.

Despite numerous attempts, existing WDRC methods still face challenges in terms of tractability and scalability. For instance, the DP-based approach introduced in~\cite{yang2020wasserstein}  for solving the WDRC problem results in a semi-infinite program, requiring computationally expensive state-space discretization or sampling. To overcome this limitation, both~\cite{yang2020wasserstein} and~\cite{kim2022minimax} propose a relaxation technique with a penalty on the Wasserstein distance, which leads to an explicit solution in the linear-quadratic (LQ) setting. While these works focus on the theoretical analysis of the obtained policies, our approach provides a practical and computationally efficient algorithm for solving the nonlinear WDRC problem.

In particular, a novel DDP method is developed through a locally quadratic approximation of a nonlinear WDRC problem, where the true disturbance distribution is unknown but a disturbance sample is given.
By construction, the proposed distributionally robust DDP (DR-DDP) algorithm provides  control policies that are robust against inevitable inaccuracies in empirical distributions of the disturbance.  
 To make the method tractable, we first approximate the WDRC problem with its penalty version and then apply the Kantorovich duality principle.   We show that the proposed approximation provides a suboptimal solution to the original WDRC problem. The value function is then decomposed in a novel way that enables us to derive computationally tractable  and efficient backward and forward passes. This allows us to obtain closed-form expressions for the distributionally robust control and worst-case distribution policies in each iteration of the DR-DDP algorithm. By avoiding the need for numerically solving minimax optimization problems, our approach makes the algorithm not only tractable but also scalable.
 The scalability of our DDP method is a remarkable advantage because the computational complexity of the standard DP algorithm in~\cite{yang2020wasserstein} for nonlinear WDRC increases exponentially with the dimension of the state space. 
 The experiment results on kinematic car navigation and coupled oscillator problems indicate that our algorithm outperforms existing methods in terms of out-of-sample performance and provides scalable solutions for high-dimensional nonlinear optimal control problems.

\section{Preliminaries}\label{sec:prel}

In this section, we introduce the WDRC problem  used in our development of the DR-DDP algorithm in Section~\ref{sec:DRDDP}.

\subsection{Distributionally Robust Control}
Consider the following discrete-time stochastic system:
\begin{equation}\label{dyn}
x_{t+1} = f(x_t,u_t, w_t),
\end{equation}
where $x_t\in\mathbb{R}^{n_x}$ and $u_t\in\mathbb{R}^{n_u}$ are the system states and control inputs, respectively. Here, $w_t\in\mathbb{R}^{n_w}$ is a random disturbance with an unknown (true) distribution $\mathbb{Q}_t^\mathrm{true} \in \mathcal{P}(\mathbb{R}^{n_w})$, where $\mathcal{P}(\mathbb{R}^{n_w})$ is the family of all Borel probability measures supported on $\mathbb{R}^{n_w}$. The nonlinear function $f:\mathbb{R}^{n_x} \times \mathbb{R}^{n_u} \times \mathbb{R}^{n_w} \to \mathbb{R}^{n_x}$ is assumed to be twice continuously differentiable.

In practice, it is restrictive to assume that the true probability distribution $\mathbb{Q}_t^\mathrm{true}$ is known. Instead, we are often given a sample dataset $\mathcal{D}_t := \{\hat{w}_t^{(1)}, \hat{w}_t^{(2)}, \dots, \hat{w}_t^{(N)}\}$ drawn from the true distribution, which can be used to construct an empirical estimate about the distribution of $w_t$ as
\[
\mathbb{Q}_t:= \frac{1}{N} \sum_{i=1}^{N}\delta_{\hat{w}_t^{(i)}},
\]
 where $\delta_{\hat{w}_t^{(i)}}$ denotes the Dirac measure concentrated at $\hat{w}_t^{(i)}$. It is well-known that as $N\to\infty$, the empirical distribution asymptotically converges to the true distribution. However, if an inaccurate empirical estimate is used in the controller design, the resulting control performance will deteriorate due to a mismatch between the true and empirical distributions.

 To hedge against such distributional uncertainties, we adopt a game-theoretic approach and consider a two-player zero-sum game in which Player I is the controller and Player II is a hypothetical adversary. Let $\pi:= (\pi_0, \ldots, \pi_{T-1})$ denote the control policy, where $\pi_t$ maps the state $x_t$ to a control input $u_t$. 
 The adversary player selects a policy $\gamma :=(\gamma_0, \dots, \gamma_{T-1})$, where $\gamma_t$ maps the current state to a probability distribution $\mathbb{P}_t$ chosen from an \emph{ambiguity set} $\mathbb{D}_t \subset \mathcal{P}(\mathbb{R}^{n_w})$. The ambiguity set is a family of distributions that possess certain properties to be described.  
 
 Throughout this paper, our goal is to design an optimal finite-horizon controller with the following cost functional:
\[
J(\pi, \gamma):= \mathbb{E}^{\pi,\gamma}\big [\ell_f(x_T) + \sum_{t=0}^{T-1}\ell(x_t, u_t)\big],
\]
where $\ell:\mathbb{R}^{n_x}\times \mathbb{R}^{n_u}\to \mathbb{R}$ and $\ell_f:\mathbb{R}^{n_x}\to \mathbb{R}$ are the twice continuously differentiable running and terminal costs, respectively, and $T$ is the time horizon. In our problem, the controller seeks a policy $\pi^*$ minimizing the cost function, while the adversary aims to find a policy $\gamma^*$ to maximize the same cost, which can be obtained by solving the following DRC problem:
\begin{equation}\label{minimax}
\min_{\pi\in\Pi} \max_{\gamma\in\Gamma_\mathbb{D}} J(\pi, \gamma),
\end{equation}
where $\Pi:=\{\pi \mid \pi_t(x_t) = u_t \in\mathbb{R}^{n_u}, \; \forall t\}$ and $\Gamma_\mathbb{D}:=\{\gamma \mid \gamma_t(x_t) = \mathbb{P}_t\in \mathbb{D}_t, \; \forall t\}$ are the sets of admissible control and distribution policies, respectively.

\subsection{Wasserstein Ambiguity Set}

In problem~\eqref{minimax}, the adversary player is restricted to select a distribution from the ambiguity set $\mathbb{D}_t$, which determines the characteristics of the worst-case distribution. Therefore, it is necessary to design the ambiguity set to appropriately characterize distributional errors.
Motivated by its advantages mentioned in Section~\ref{sec:intro},   we use the Wasserstein  ambiguity set constructed around the given empirical distribution.
The Wasserstein metric of order $p$ between two distributions  $\mathbb{P}$ and $\mathbb{Q}$ supported on $\mathcal{W} \subseteq \mathbb{R}^n$ represents the minimum cost of redistributing mass from one distribution to another using a small non-uniform perturbation and is defined as
\[
\begin{split}
W_p(\mathbb{P}, \mathbb{Q}) := \inf_{\tau \in \mathcal{P}(\mathcal{W}^2)} \bigg\{  \Big(\int_{\mathcal{W}^2} & \|x-y\|^p \, \mathrm{d}\tau(x,y)  \Big)^{1/p} 
\\
&\big| \, \Pi^1 \tau = \mathbb{P}, \Pi^2 \tau = \mathbb{Q} \bigg\},
\end{split}
\]
where  $\tau$ is the \emph{transport plan} with $\Pi^i\tau$ denoting its $i$th marginal distribution, and $\|\cdot\|$ is a norm on $\mathbb{R}^n$ which quantifies the transportation cost.

In this work, we consider the Wasserstein metric of order $p=2$ with the transportation cost represented by the standard Euclidean norm. We design the ambiguity set as follows:
\begin{equation}\label{amb_set}
\mathbb{D}_t:=\{ \mathbb{P}_t\in\mathcal{P}(\mathbb{R}^{n_w}) \mid W_2(\mathbb{P}_t, \mathbb{Q}_t)\leq \theta\},
\end{equation}
where $\theta >0$ determines the size of $\mathbb{D}_t$. The ambiguity set~\eqref{amb_set} is a statistical ball centered at the empirical distribution $\mathbb{Q}_t$ and contains all distributions whose Wasserstein distance from the empirical distribution is no greater than radius $\theta$.

\section{Distributionally Robust Differential Dynamic Programming}\label{sec:DRDDP}

In this section, 
 we present our main result, called DR-DDP, which efficiently finds an approximate solution to the WDRC problem.
 Our method exploits the Kantorovich duality principle to decompose the value function in a novel way
  and
 devise a computationally tractable algorithm.

\subsection{Approximation with Wasserstein Penalty}

In~\cite{kim2022minimax}, the tractability and effectiveness of a penalty version of the WDRC problem are studied. 
Motivated by this work, 
we begin our reformulations by replacing the Wasserstein ambiguity set constraint with a penalty term in the cost function as follows:
\[
J_\lambda(\pi, \gamma):= \mathbb{E}^{\pi,\gamma}\big[ \ell_f(x_T)+ \sum_{t=0}^{T-1} \ell(x_t, u_t) 
- \lambda W_2(\mathbb{P}_t,\mathbb{Q}_t)^2\big],
\]
where $\lambda >0$ is the penalty parameter adjusting the conservativeness of the controller. 

Then, the following minimax control problem approximates the original WDRC problem~\eqref{minimax}:
\begin{equation}\label{minimax_penalty}
\min_{\pi\in\Pi}\max_{\gamma\in\Gamma} J_\lambda(\pi, \gamma),
\end{equation}
where the adversary player selects policies from $\Gamma:=\{\gamma:=(\gamma_0,\dots, \gamma_{T-1}) \mid \gamma_t(x_t) = \mathbb{P}_t\in\mathcal{P}(\mathbb{R}^{n_w})\}$. Note that the adversary is not restricted to select distributions from the ambiguity set. Instead, we penalize large deviations from the empirical distribution via the penalty term, thus limiting the freedom of the adversary player.

We demonstrate in the following proposition that the cost incurred by an arbitrary policy $\pi\in\Pi$ under the worst-case distributions within the Wasserstein ambiguity set has a guaranteed cost property with respect to the worst-case penalized cost. Hence, the penalty problem~\eqref{minimax_penalty} is a reasonable approximation as it yields a suboptimal solution to the WDRC problem~\eqref{minimax}.

\begin{proposition}\label{prop:gc}
Given $\lambda > 0$, 
let $\pi \in \Pi$ be any arbitrary policy.
Then, the cost incurred by $\pi$ under the worst-case distribution policy in $\Gamma_{\mathbb{D}}$ is upper-bounded  as follows:
\begin{equation}\label{guarantee}
 \sup_{\gamma \in \Gamma_{\mathbb{D}}} J(\pi, \gamma) \leq  \lambda T \theta^2 +  \sup_{\gamma \in \Gamma} J_\lambda (\pi, \gamma).
\end{equation}
\end{proposition}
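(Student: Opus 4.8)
The plan is to exploit the fact that the penalty cost $J_\lambda$ differs from $J$ by exactly the (expected, discounted) penalty term, and that this term is uniformly controlled whenever the adversary is confined to the ambiguity set. First I would write down the exact identity relating the two cost functionals. Directly from the definitions,
\[
J(\pi,\gamma) = J_\lambda(\pi,\gamma) + \lambda\, \mathbb{E}^{\pi,\gamma}\Big[\textstyle\sum_{t=0}^{T-1} W_2(\mathbb{P}_t,\mathbb{Q}_t)^2\Big],
\]
which holds for every admissible pair $(\pi,\gamma)$, since the two functionals share the same running and terminal cost and the only discrepancy is the subtracted penalty $\lambda \sum_t W_2(\mathbb{P}_t,\mathbb{Q}_t)^2$ inside the expectation.

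Next I would restrict attention to $\gamma \in \Gamma_{\mathbb{D}}$ and bound the penalty term. For such a policy, the adversary's choice satisfies $\mathbb{P}_t \in \mathbb{D}_t$, i.e.\ $W_2(\mathbb{P}_t,\mathbb{Q}_t) \le \theta$, for every time $t$. The one subtlety to note is that $\mathbb{P}_t = \gamma_t(x_t)$ is state-dependent, so $W_2(\mathbb{P}_t,\mathbb{Q}_t)^2$ is a random variable rather than a constant; however, the ambiguity-set constraint holds pathwise, so $W_2(\mathbb{P}_t,\mathbb{Q}_t)^2 \le \theta^2$ almost surely, and this bound survives the expectation. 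Summing over the horizon yields
\[
\lambda\, \mathbb{E}^{\pi,\gamma}\Big[\textstyle\sum_{t=0}^{T-1} W_2(\mathbb{P}_t,\mathbb{Q}_t)^2\Big] \le \lambda T \theta^2,
\]
so that $J(\pi,\gamma) \le \lambda T \theta^2 + J_\lambda(\pi,\gamma)$ for every $\gamma \in \Gamma_{\mathbb{D}}$.

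Finally I would pass to the suprema using the set inclusion $\Gamma_{\mathbb{D}} \subseteq \Gamma$, which is immediate because the penalty problem drops the ambiguity-set constraint and permits the adversary to select \emph{any} $\mathbb{P}_t \in \mathcal{P}(\mathbb{R}^{n_w})$. Hence, for each $\gamma \in \Gamma_{\mathbb{D}}$ we have $J_\lambda(\pi,\gamma) \le \sup_{\gamma' \in \Gamma} J_\lambda(\pi,\gamma')$, and combining with the previous display gives $J(\pi,\gamma) \le \lambda T \theta^2 + \sup_{\gamma' \in \Gamma} J_\lambda(\pi,\gamma')$. Taking the supremum over $\gamma \in \Gamma_{\mathbb{D}}$ on the left-hand side then produces the claimed bound~\eqref{guarantee}.

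This argument is essentially elementary, so I do not expect a genuine obstacle; the only point requiring care is the observation flagged above, namely that the pathwise ambiguity-set constraint (rather than a constraint on an averaged Wasserstein distance) is what legitimizes pulling the $\theta^2$ bound through the expectation. I would make sure to state this explicitly so the reader sees that no measure-theoretic technicality is being glossed over.
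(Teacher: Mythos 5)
Your proof is correct and follows essentially the same route as the paper's: both rest on the identity $J = J_\lambda + \lambda\,\mathbb{E}[\sum_t W_2(\mathbb{P}_t,\mathbb{Q}_t)^2]$, the pathwise bound $W_2(\mathbb{P}_t,\mathbb{Q}_t)\le\theta$ for $\gamma\in\Gamma_{\mathbb{D}}$, and the inclusion $\Gamma_{\mathbb{D}}\subseteq\Gamma$; the paper merely phrases the final step via an $\varepsilon$-near-optimal policy $\gamma^\varepsilon$ instead of taking the supremum of the pointwise inequality directly, which is logically equivalent. (Your parenthetical ``discounted'' is a harmless slip --- there is no discounting in this finite-horizon cost.)
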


Its proof can be found in Appendix~\ref{app:gc}.
The guaranteed cost property indicates the role of the penalty parameter $\lambda$ in adjusting the robustness of the control policy, thereby providing a guideline on its selection. Specifically, the penalty parameter can be chosen to yield the least upper bound in~\eqref{guarantee} under the given control policy.

To formalize our algorithm, we recursively define the optimal value function for   problem~\eqref{minimax_penalty} as follows:
\begin{equation*}
\begin{split}
V_t(\bm{x}) := \inf_{\pi\in\Pi}\sup_{\gamma \in \Gamma}  \mathbb{E}^{\pi,\gamma}\bigg[ & \ell_f(x_T)+ \sum_{s=t}^{T-1} \ell(x_s, u_s) 
\\
&- \lambda W_2(\mathbb{P}_s,\mathbb{Q}_s)^2 \mid x_t = \bm{x}\bigg]
\end{split}
\end{equation*}
for $t=T-1,\dots, 0$, with the terminal condition $V_T(\bm{x})= \ell_f(\bm{x})$. Then, the DP principle yields
\begin{equation}\label{vf}
\begin{split}
V_t(\bm{x}) & = \inf_{\bm{u}\in\mathbb{R}^{n_u}}\sup_{\mathbb{P} \in \mathcal{P}(\mathbb{R}^{n_w})}  \ell (\bm{x},\bm{u}) \\
&+ \mathbb{E}^{w \sim \mathbb{P}}\bigg[ V_{t+1}(f(\bm{x},\bm{u},w))
- \lambda W_2(\mathbb{P},\mathbb{Q}_t)^2\bigg]
\end{split}
\end{equation}
with the optimal cost given by
\[
J_\lambda^* := \inf_{\pi\in\Pi}\sup_{\gamma\in\Gamma} J_\lambda(\pi,\gamma) = V_0(x_0).
\]

Unfortunately, the standard procedure for DDP cannot be applied to the value function~\eqref{vf} as it constitutes an infinite-dimensional optimization problem over $\mathcal{P}(\mathbb{R}^{n_w})$. For tractability, we employ a modern DRO technique based on the Kantorovich duality principle~\cite{yang2020wasserstein, sinha2017certifying}
and reformulate the value function as follows.

\begin{proposition}\label{prop:vf}
Suppose that for each $(\bm{x}, \bm{u})\in\mathbb{R}^{n_x}\times \mathbb{R}^{n_u}$, the value function is measurable and that the outer minimization problem in~\eqref{vf} has an optimal solution. Then, for any $\lambda > 0$, we have that
\begin{equation}\label{vf_kant}
\begin{split}
 V_t  (\bm{x})  &= \inf_{\bm{u}\in\mathbb{R}^{n_u}}  \ell (\bm{x},\bm{u})  +  \mathbb{E}^{\hat{w}_t\sim \mathbb{Q}_t}\Big[\\
&\sup_{\bm{w} \in \mathbb{R}^{n_w}}  V_{t+1}(f(\bm{x},\bm{u},\bm{w}))
- \lambda \| \hat{w}_t - \bm{w}\|^2\Big],
\end{split}
\end{equation}
for all $\bm{x}\in\mathbb{R}^{n_x}$.
\end{proposition}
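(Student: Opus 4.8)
The plan is to isolate, for fixed $(\bm{x},\bm{u})$, the inner maximization over $\mathbb{P}$ in~\eqref{vf} and prove the pointwise identity
\[
\sup_{\mathbb{P}\in\mathcal{P}(\mathbb{R}^{n_w})}\!\Big\{\mathbb{E}^{w\sim\mathbb{P}}\![g(w)]-\lambda W_2(\mathbb{P},\mathbb{Q}_t)^2\Big\}=\mathbb{E}^{\hat{w}_t\sim\mathbb{Q}_t}\!\Big[\sup_{\bm{w}\in\mathbb{R}^{n_w}}g(\bm{w})-\lambda\|\hat{w}_t-\bm{w}\|^2\Big],
\]
where $g(\bm{w}):=V_{t+1}(f(\bm{x},\bm{u},\bm{w}))$; substituting this into~\eqref{vf} and taking the infimum over $\bm{u}$, which is attained by hypothesis, then yields~\eqref{vf_kant}. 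The key tool is the primal coupling definition $W_2(\mathbb{P},\mathbb{Q}_t)^2=\inf_{\tau}\int\|w-\hat{w}\|^2\,\mathrm{d}\tau(w,\hat{w})$ over plans $\tau$ with marginals $\Pi^1\tau=\mathbb{P}$ and $\Pi^2\tau=\mathbb{Q}_t$. Since the penalty enters with a negative sign, $-\lambda W_2(\mathbb{P},\mathbb{Q}_t)^2$ turns the infimum over $\tau$ into a supremum, so the whole inner problem becomes a joint supremum over couplings with fixed second marginal $\mathbb{Q}_t$.

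For the ``$\le$'' direction I would fix any $\mathbb{P}$ and any admissible plan $\tau$. Because $\mathbb{E}^{w\sim\mathbb{P}}[g(w)]=\int g(w)\,\mathrm{d}\tau(w,\hat{w})$ depends only on the first marginal, the penalized objective equals $\int\big(g(w)-\lambda\|w-\hat{w}\|^2\big)\,\mathrm{d}\tau(w,\hat{w})$, which is bounded above by $\int\sup_{\bm{w}}\big(g(\bm{w})-\lambda\|\bm{w}-\hat{w}\|^2\big)\,\mathrm{d}\mathbb{Q}_t(\hat{w})$. Taking the supremum over admissible $\tau$ and then over $\mathbb{P}$ preserves this bound and gives one inequality.

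The reverse ``$\ge$'' direction is where I expect the main obstacle to lie. Here I would use that any plan $\tau$ with $\Pi^2\tau=\mathbb{Q}_t$ disintegrates as $\tau(\mathrm{d}w,\mathrm{d}\hat{w})=\kappa(\mathrm{d}w\mid\hat{w})\,\mathbb{Q}_t(\mathrm{d}\hat{w})$ for some Markov kernel $\kappa$, and that, conversely, every such kernel yields an admissible plan. Hence the inner supremum equals $\sup_{\kappa}\int\!\big[\int(g(w)-\lambda\|w-\hat{w}\|^2)\,\kappa(\mathrm{d}w\mid\hat{w})\big]\,\mathbb{Q}_t(\mathrm{d}\hat{w})$, and it remains to push the supremum inside the outer integral. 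For each fixed $\hat{w}$ the inner integral is maximized by concentrating $\kappa(\cdot\mid\hat{w})$ on a maximizer of $\bm{w}\mapsto g(\bm{w})-\lambda\|\bm{w}-\hat{w}\|^2$, so the crux is to select such a (near-)maximizer $\bm{w}^\star(\hat{w})$ measurably in $\hat{w}$ and set $\kappa(\cdot\mid\hat{w})=\delta_{\bm{w}^\star(\hat{w})}$. Since $V_{t+1}$ is assumed measurable and $f$ is continuous, the map $(\bm{w},\hat{w})\mapsto g(\bm{w})-\lambda\|\bm{w}-\hat{w}\|^2$ is a normal integrand, so its pointwise supremum over $\bm{w}$ is measurable and admits $\varepsilon$-optimal measurable selections via a measurable maximum theorem; letting $\varepsilon\downarrow 0$ closes the gap and establishes ``$\ge$''. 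Combining the two inequalities proves the identity, and hence the proposition.
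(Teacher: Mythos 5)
Your proof is correct in substance and, for the first two steps, mirrors the paper's: both reformulations absorb the penalty $-\lambda W_2(\mathbb{P},\mathbb{Q}_t)^2$ into a single supremum over couplings $\tau$ with $\Pi^2\tau=\mathbb{Q}_t$, and both get the ``$\le$'' direction by bounding the integrand pointwise by its supremum over the first argument. Where you diverge is the ``$\ge$'' direction: the paper simply invokes strong duality from Proposition~1 of Sinha et al.\ (cited as~\cite{sinha2017certifying}) and stops, whereas you give a self-contained constructive argument via disintegration of $\tau$ into a kernel $\kappa(\cdot\mid\hat w)\,\mathbb{Q}_t(\mathrm{d}\hat w)$ and Dirac kernels concentrated on measurable $\varepsilon$-maximizers. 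That buys transparency and avoids an external black box, at the cost of a measurability technicality you gloss over: mere Borel measurability of $V_{t+1}$ does not make $(\bm{w},\hat w)\mapsto g(\bm{w})-\lambda\|\bm{w}-\hat w\|^2$ a normal integrand (that notion requires semicontinuity in $\bm{w}$), so the measurable maximum theorem does not apply directly; you would need a projection/Jankov--von Neumann type selection theorem yielding universally measurable $\varepsilon$-selections. In this paper's setting, however, the issue evaporates: $\mathbb{Q}_t=\frac{1}{N}\sum_i\delta_{\hat w_t^{(i)}}$ is finitely supported, so the outer integral is a finite sum, you only need to choose one $\varepsilon$-maximizer per atom, and the candidate coupling is just $\frac{1}{N}\sum_i\delta_{(\bm{w}^{(i)},\hat w_t^{(i)})}$ --- no selection theorem required. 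With that observation (or the corrected selection argument), your route is a valid and arguably more informative alternative to the paper's citation of strong duality.
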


Its proof can be found in Appendix~\ref{app:vf}.
While previous works (e.g.,~\cite{yang2020wasserstein}) use similar approaches to reformulate and analyze the solution to the WDRC problem, our focus is on designing a practical and efficient method for obtaining tractable solutions.  For that, we let
\[
Q_t^{(i)}(\bm{x},\bm{u},\bm{w}) :=  \ell(\bm{x},\bm{u}) + V_{t+1}(f(\bm{x},\bm{u},\bm{w}))
- \lambda \| \hat{w}_t^{(i)} - \bm{w}\|^2
\]
denote the state-action-disturbance value function or the Q-function for each sample index $i=1,\dots,N$ and 
\[
Q_t^{*,(i)}(\bm{x}, \bm{u}) = \sup_{\bm{w} \in \mathbb{R}^{n_w}} Q_t^{(i)}(\bm{x},\bm{u},\bm{w})
\]
denote the corresponding ``worst-case'' state-action value function. Then, we obtain that
\begin{equation}\label{vf_Q}
V_t(\bm{x}) = \inf_{\bm{u}\in\mathbb{R}^{n_u}} \frac{1}{N}\sum_{i=1}^{N} Q_t^{*,(i)}(\bm{x},\bm{u}).
\end{equation}
It is worth emphasizing that  the Kantorovich duality principle enables us to obtain
this novel decomposition of the value function, which 
can be used to design a computationally tractable 
DR-DDP solution 
in the following subsection.

\subsection{Solution via DDP}\label{sec:sol_ddp}

In each iteration of the original DDP algorithm, a backward pass is performed on the current estimate of the state and control trajectories, called the \emph{nominal trajectories}, followed by a forward pass. In the backward pass, the cost function and the system dynamics are quadratically approximated around the nominal trajectories to update the policy, while in the forward pass, the nominal trajectories are recomputed by executing the latest policy to the system. We adopt this technique for our problem and derive the backward and forward passes for the value function~\eqref{vf_kant}.  The proposed DR-DDP method is
presented in Algorithm~\ref{alg:dr_ddp}.

\subsubsection{Backward Pass}\label{sec:backward_pass}

In each backward pass, we are given nominal state, control input, and disturbance trajectories $\bm{\bar{x}}_\mathrm{nom} = (\bm{\bar{x}}_0, \dots, \bm{\bar{x}}_T)$, $\bm{\bar{u}}_\mathrm{nom} = (\bm{\bar{u}}_0, \dots, \bm{\bar{u}}_{T-1})$ and $\bm{\bar{w}}_\mathrm{nom} = (\bm{\bar{w}}_0, \dots, \bm{\bar{w}}_{T-1})$, respectively. 
For quadratic approximations, DDP considers the following deviations of the system state, control input, and disturbance, i.e., $\delta x_t := x_t - \bm{\bar{x}}_t$, $\delta u_t:= u_t - \bm{\bar{u}}_t$, $\delta w_t:= w_t - \bm{\bar{w}}_t$.

We first consider the following second-order approximation of $V_{t+1}(x_{t+1})$:
\begin{equation}\label{vf_approx}
  \bm{V}_{t+1} + V_{t+1,x}^\top \delta x_{t+1} + \frac{1}{2} \delta x_{t+1}^\top V_{t+1,xx}\delta x_{t+1},
\end{equation}
for some $(\bm{V}_{t+1}, V_{t+1,x}, V_{t+1,xx})\in \mathbb{R} \times \mathbb{R}^{n_x} \times \mathbb{R}^{n_x\times n_x}$ to be determined.\footnote{If $V_{t+1}$ is twice differentiable, the parameters $(\bm{V}_{t+1}, V_{t+1,x}, V_{t+1,xx})$ can be simply determined using the second-order Taylor expansion. } 
Let $\hat{Q}_t^{(i)}$ be an approximate Q-function, defined by 
replacing $V_{t+1}$ in the definition of $Q_t^{(i)}$
with the approximate value function~\eqref{vf_approx}.
Then, $\hat{Q}_t^{(i)}( x_t, u_t,  w_t)$ is twice differentiable and its second-order Taylor expansion is given by
\begin{equation}\label{approx_Q}
\begin{split}
  \bm{Q}_t^{(i)} + \delta Q_t^{(i)}(\delta x_t, \delta u_t, \delta w_t),
\end{split}
\end{equation}
where 
\[
\begin{split}
\delta Q_t^{(i)}(\delta x_t, \delta u_t, \delta w_t) = Q_{t,x}^\top \delta x_t &+ Q_{t,u}^\top \delta u_t + {Q_{t,w}^{(i)}}^\top\delta w_t \\
&+ \frac{1}{2} \Delta Q_t(\delta x_t, \delta u_t, \delta w_t)
\end{split}
\]
with 
\[
\Delta Q_t( \delta x, \delta u, \delta w):=\begin{bmatrix}\delta x \\ \delta u\\ \delta w\end{bmatrix}^\top \begin{bmatrix} Q_{t,xx} & Q_{t,xu} & Q_{t,xw}\\ Q_{t,xu}^\top & Q_{t,uu} & Q_{t,uw} \\ Q_{t,xw}^\top & Q_{t,uw}^\top & Q_{t,ww} \end{bmatrix} \begin{bmatrix}\delta x \\ \delta u \\ \delta w\end{bmatrix}
\]
and
\begin{equation*}\label{q_update}
\left \{
\begin{array}{l}
\bm{Q}_t^{(i)} = \ell(\bm{\bar{x}}_t, \bm{\bar{u}}_t) + \bm{V}_{t+1}- \lambda\|\bm{\bar{w}}_t - \hat{w}_t^{(i)}\|^2\\
Q_{t,xx} = \ell_{t,xx} + f_{t,x}^\top V_{t+1,xx} f_{t,x} + V_{t+1,x}^\top f_{t,xx}\\
Q_{t,uu} = \ell_{t,uu} + f_{t,u}^\top V_{t+1,xx} f_{t,u}+ V_{t+1,x}^\top f_{t,uu}\\
Q_{t,ww} = f_{t,w}^\top V_{t+1,xx} f_{t,w} - 2\lambda I+ V_{t+1,x}^\top f_{t,ww}\\
Q_{t,xu} = \ell_{t,xu} + f_{t,x}^\top V_{t+1,xx} f_{t,u}\\
Q_{t,xw} = f_{t,x}^\top V_{t+1,xx} f_{t,w}, \quad Q_{t,uw} = f_{t,u}^\top V_{t+1,xx} f_{t,w}\\
Q_{t,x}  =  \ell_{t,x} + f_{t,x}^\top V_{t+1,x}, \quad Q_{t,u} = \ell_{t,u} +  f_{t,u}^\top V_{t+1,x}\\
Q_{t,w}^{(i)}  = f_{t,w}^\top V_{t+1,x}  - 2\lambda (\bm{\bar{w}}_t - \hat{w}_t^{(i)}).
\end{array}
\right.
\end{equation*}
Here, $f_{t,\cdot}$ and $\ell_{t,\cdot}$ denote the partial derivatives of $f$ and $\ell$ evaluated at $(\bar{\bm{x}}_t, \bar{\bm{u}}_t, \bar{\bm{w}}_t)$. 

Let
$\hat{\bar{w}}_t := \mathbb{E}^{\hat{w}_t\sim \mathbb{Q}_t}[\hat{w}_t]$ and
$\hat{\Sigma}_t := \mathbb{E}^{\hat{w}_t\sim \mathbb{Q}_t}[(\hat{w}_t - \hat{\bar{w}}_t)(\hat{w}_t - \hat{\bar{w}}_t)^\top]$
denote the empirical mean vector and covariance matrix of disturbance $w_t$, respectively.
The above approximation transforms the problem~\eqref{vf_Q} into a quadratic form similar to that addressed in \cite{kim2022minimax}. This approximation enables us to explicitly solve the problem with respect to $\delta u_t$ and $\delta w_t$, as presented in the following theorem.

\begin{theorem}\label{thm:backward}
Let $Q_{t,ww} \prec 0$ and $\ell_{t,uu} \succ 0$. Suppose the value function at time $t+1$ is approximated as~\eqref{vf_approx}. Then, the outer minimization problem in~\eqref{vf_Q} with $Q_{t}^{(i)}(x_t, u_t, w_t)$ replaced by the approximation~\eqref{approx_Q} has the following unique minimizer:
\begin{equation}\label{opt_cont}
\delta u_t^* = K_t \delta x_t + k_t,
\end{equation}
where
\begin{equation}\label{control}
\begin{split}
    K_t & = -\tilde{Q}_t ( Q_{t,xu}^\top - Q_{t,uw} Q_{t,ww}^{-1} Q_{t,xw}^\top )\\
    k_t & = -\tilde{Q}_t (  Q_{t,u} - Q_{t,uw}Q_{t,ww}^{-1} \bar{Q}_{t,w})
\end{split}
\end{equation}
with $\tilde{Q}_t:=(Q_{t,uu} - Q_{t,uw} Q_{t,ww}^{-1} Q_{t,uw}^\top)^{-1}$ and $\bar{Q}_{t,w} := f_{t,w}^\top V_{t+1,x} - 2\lambda (\bm{\bar{w}}_t - \hat{\bar{w}}_t)$.
 
Moreover, for each $i=1,\dots, N$, the maximization problem in~\eqref{vf_Q} with $Q_{t}^{(i)}(x_t, u_t, w_t)$ replaced by the approximation~\eqref{approx_Q} has the following unique solution:
\begin{equation}\label{wc_dist}
\delta w_t^{*,(i)} = H_t \delta x_t + h_t^{(i)} ,
\end{equation}
where 
\begin{equation}\label{wc_dist_para}
\begin{split}
    H_t &=  -Q_{t,ww}^{-1} [Q_{t,uw}^\top K_t + Q_{t,xw}^\top] \\ h_t^{(i)} &= - Q_{t,ww}^{-1}[Q_{t,uw}^\top k_t + Q_{t,w}^{(i)}].
\end{split}
\end{equation}
\end{theorem}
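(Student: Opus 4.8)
The plan is to treat the quadratically approximated objective as a finite-dimensional quadratic minimax problem and solve it in closed form by two successive first-order conditions: first the inner maximization over $\delta w_t$, carried out separately for each sample, and then the outer minimization over $\delta u_t$ of the sample average in~\eqref{vf_Q}. The structural observation that makes this work is that, for every index $i$, the approximate Q-function $\bm{Q}_t^{(i)} + \delta Q_t^{(i)}$ in~\eqref{approx_Q} shares the \emph{same} quadratic form (the common Hessian appearing in $\Delta Q_t$); the index $i$ enters only through the linear coefficient $Q_{t,w}^{(i)}$ and the additive constant $\bm{Q}_t^{(i)}$. Hence the curvature seen by both players is independent of $i$, which is precisely what allows a single gain pair $(K_t,k_t)$ to solve the averaged outer problem while the disturbance offset $h_t^{(i)}$ retains its dependence on $i$.

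For the inner problem I would fix $\delta x_t,\delta u_t$ and isolate the dependence of $\delta Q_t^{(i)}$ on $\delta w_t$, namely $\tfrac12\,\delta w_t^\top Q_{t,ww}\,\delta w_t + (Q_{t,xw}^\top\delta x_t + Q_{t,uw}^\top\delta u_t + Q_{t,w}^{(i)})^\top\delta w_t$ plus terms free of $\delta w_t$. Since $Q_{t,ww}\prec 0$, this map is strictly concave and coercive, so its unique maximizer is the stationary point $\nabla_{\delta w_t}=0$, i.e.,
\[
\delta w_t^{*,(i)} = -Q_{t,ww}^{-1}\big(Q_{t,xw}^\top\delta x_t + Q_{t,uw}^\top\delta u_t + Q_{t,w}^{(i)}\big).
\]
Substituting the control law $\delta u_t = K_t\delta x_t + k_t$ (determined next) and collecting the $\delta x_t$ terms reproduces exactly~\eqref{wc_dist}--\eqref{wc_dist_para}.

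For the outer problem I would plug $\delta w_t^{*,(i)}$ back in; the maximized value of sample $i$ equals $-\tfrac12\,b_i^\top Q_{t,ww}^{-1}b_i$ plus $\delta w_t$-free terms, where $b_i := Q_{t,xw}^\top\delta x_t + Q_{t,uw}^\top\delta u_t + Q_{t,w}^{(i)}$. Differentiating $\tfrac1N\sum_i Q_t^{*,(i)}$ in $\delta u_t$ and using the empirical-mean identity $\tfrac1N\sum_i Q_{t,w}^{(i)} = \bar{Q}_{t,w}$ (which follows since $\hat{\bar w}_t = \tfrac1N\sum_i \hat w_t^{(i)}$) yields the stationarity condition
\[
\begin{aligned}
&\big(Q_{t,uu} - Q_{t,uw}Q_{t,ww}^{-1}Q_{t,uw}^\top\big)\delta u_t + \big(Q_{t,xu}^\top - Q_{t,uw}Q_{t,ww}^{-1}Q_{t,xw}^\top\big)\delta x_t\\
&\quad + \big(Q_{t,u} - Q_{t,uw}Q_{t,ww}^{-1}\bar{Q}_{t,w}\big)=0,
\end{aligned}
\]
which rearranges to $\delta u_t^* = K_t\delta x_t + k_t$ with the $K_t,k_t$ of~\eqref{control}. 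Crucially, averaging affects only the linear term (replacing $Q_{t,w}^{(i)}$ by $\bar{Q}_{t,w}$), because the reduced Hessian $Q_{t,uu}-Q_{t,uw}Q_{t,ww}^{-1}Q_{t,uw}^\top$ is common to all $i$.

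The main obstacle is certifying that these stationary points are genuine, unique extrema rather than saddle points, so that~\eqref{opt_cont} and~\eqref{wc_dist} truly solve the approximated minimax. For the inner step this is immediate from $Q_{t,ww}\prec 0$. For the outer step I must show the reduced Hessian $\tilde{Q}_t^{-1} = Q_{t,uu}-Q_{t,uw}Q_{t,ww}^{-1}Q_{t,uw}^\top$ is positive definite, which delivers strict convexity of the averaged objective in $\delta u_t$, invertibility of $\tilde{Q}_t$, and uniqueness. Here I would invoke $Q_{t,ww}\prec 0$, which makes the correction term $-Q_{t,uw}Q_{t,ww}^{-1}Q_{t,uw}^\top = Q_{t,uw}(-Q_{t,ww}^{-1})Q_{t,uw}^\top \succeq 0$, together with $\ell_{t,uu}\succ 0$ (ensuring $Q_{t,uu}\succ 0$) to conclude $\tilde{Q}_t^{-1}\succ 0$. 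These same definiteness conditions guarantee that each per-sample quadratic admits a well-defined saddle point, so the order of the inner $\sup$ over $\delta w_t$ and the outer $\inf$ over $\delta u_t$ is consistent and the closed-form solution is exact.
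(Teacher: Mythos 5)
Your proposal is correct and follows essentially the same route as the paper's proof: solve the strictly concave inner maximization over $\delta w_t^{(i)}$ by its first-order condition, substitute back (noting that averaging over samples only replaces $Q_{t,w}^{(i)}$ by $\bar{Q}_{t,w}$ since the Hessian is sample-independent), and then solve the outer minimization by stationarity. Your explicit Schur-complement argument for $\tilde{Q}_t^{-1}\succ 0$ is actually more careful than the paper's appeal to ``strong convexity,'' though note that both arguments tacitly assume $Q_{t,uu}\succ 0$, which does not follow from $\ell_{t,uu}\succ 0$ alone unless the terms $f_{t,u}^\top V_{t+1,xx}f_{t,u}+V_{t+1,x}^\top f_{t,uu}$ are controlled.
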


\begin{proof}  Let $\delta w_t^{(i)}:= w_t^{(i)} - \bm{\bar{w}}_t$. Evaluating the approximate Q-function~\eqref{approx_Q} for $\delta w_t^{(i)}$, we see that it is strictly concave in $\delta w_t^{(i)}$ as $Q_{t,ww} \prec 0$. Then, the first-order optimality condition yields the following unique maximizer:
\begin{equation}\label{wc_dist_1}
    \delta w_t^{*,(i)} = -Q_{t,ww}^{-1} (Q_{t,xw}^\top \delta x_t + Q_{t,uw}^\top \delta u_t + Q_{t,w}^{(i)}).
\end{equation}
Replacing $Q_{t}^{(i)}(x_t, u_t, w_t)$ with the approximation~\eqref{approx_Q}, the objective function in~\eqref{vf_Q} is quadratically approximated as
\[
\begin{split}
  \frac{1}{N}\sum_{i=1}^{N}\Big[  \bm{Q}_t^{(i)}  + \delta Q_t^{(i)} & (\delta x_t, \delta u_t, \delta w_t^{*,(i)})\Big]\\
 = \, &\bar{\bm{Q}}_t + Q_{t,x}^\top \delta x_t + Q_{t,u}^\top \delta u_t + \bar{Q}_{t,w}^\top \overline{\delta w}_t^* \\
 &+ \frac{1}{2} \Delta Q_t(\delta x_t, \delta u_t, \overline{\delta w_t}^*),
\end{split}
\]
where
\[
\begin{split}
\overline{\delta w}_t^* : =  \, &\frac{1}{N}\sum_{i=1}^N \delta w_t^{*,(i)} \\
= \, & - Q_{t,ww}^{-1}(Q_{t,xw}^\top \delta x_t + Q_{t,uw}^\top \delta u_t + \bar{Q}_{t,w})
\end{split}
\]
and
\[
\begin{split}
\bar{\bm{Q}}_t:= \ell_t(\bm{\bar{x}}_t, \bm{\bar{u}}_t) +  \bm{V}_{t+1} & - \lambda \|\bm{\bar{w}}_t - \hat{\bar{w}}_t\|^2 
-\lambda \mathrm{Tr}[\hat{\Sigma}_t] \\
&- 2\lambda^2 \mathrm{Tr}[Q_{t,ww}^{-1} {\hat{\Sigma}_t}].
\end{split}
\]
To minimize this approximated objective function with respect to $\delta u_t$,  the following first-order optimality condition can be used:
\[
\begin{split}
0 =\, & Q_{t,u} + Q_{t,uu} \delta u_t + Q_{t,xu}^\top \delta x_t + Q_{t,uw}\overline{\delta w}_t^* \\
&+ \frac{\partial \overline{\delta w}_t^*}{\partial \delta u_t}^\top (\bar{Q}_{t,w}  + Q_{t,xw}^\top \delta x_t + Q_{t,uw}^\top \delta u_t + Q_{t,ww}\overline{\delta w}_t^*).
\end{split}
\]
By the strong convexity of the quadratic approximation, its minimizer  is uniquely given by
\[
\begin{split}
\delta u_t^* =\, &  - \tilde{Q}_t \big( Q_{t,u} -  Q_{t,uw}Q_{t,ww}^{-1} \bar{Q}_{t,w}\\
&+[Q_{t,xu}^\top - Q_{t,uw} Q_{t,ww}^{-1} Q_{t,xw}^\top] \delta x_t \big),
\end{split}
\]
which is equivalent to~\eqref{opt_cont}. By substituting $\delta u_t^*$ into~\eqref{wc_dist_1}, we obtain the maximizer defined in~\eqref{wc_dist}.
\end{proof}

Theorem~\ref{thm:backward} provides the remarkable advantage that a DR-DDP  policy pair $(\bar{\pi}^*, \bar{\gamma}^*)$ is constructed in the following closed-form without numerically solving any infinite-dimensional minimax optimization problems:
\begin{subequations}
\begin{align}
\bar{\pi}_t^*(x_t) &= \bar{\bm{u}}_t + K_t (x_t - \bar{\bm{x}}_t) + k_t \label{aff_cont_pol}\\
\bar{\gamma}_t^*(x_t) &= \frac{1}{N}\sum_{i=1}^{N} \delta_{(\bar{\bm{w}}_t + h_t^{(i)} + H_t (x_t - \bm{\bar{x}}_t))}.\label{wc_dist_}
\end{align}
\end{subequations}
As a result of the backward pass, we also obtain the following  equations for updating the parameters of  the approximate value function~\eqref{vf_approx}:
\begin{equation}\label{quad_approx}
\begin{split}
\bm{V}_t =\, & \bar{\bm{Q}}_t + Q_{t,u}^\top k_t+\bar{Q}_{t,w}^\top h_t  \\
&+ \frac{1}{2}k_t^\top Q_{t,uu} k_t + \frac{1}{2}h_t^\top Q_{t,ww} h_t + k_t^\top Q_{t,uw} h_t\\
V_{t,x} = \,& Q_{t,x} + Q_{t,xu} k_t + K_t^\top  (Q_{t,u} + Q_{t,uu} k_t + Q_{uw} h_t) \\
&+Q_{xw} h_t+ H_t^\top (\bar{Q}_{t,w} + Q_{t,ww} h_t + Q_{t,uw}^\top k_t) \\
V_{t,xx} =\, & Q_{t,xx} + K_t^\top Q_{t,uu} K_t + H_t^\top Q_{t,ww} H_t + 2 Q_{t,xu} K_t \\
&+ 2K_t^\top Q_{t,uw} H_t + 2 Q_{t,xw} H_t,
\end{split}
\end{equation}
where $h_t:= \frac{1}{N} \sum_{i=1}^{N} h_t^{(i)}$.
%

In practice, it is not common to assume that control inputs are unrestricted. Often, the control inputs are limited to some box constraints $\overline{\mathbf{u}} \leq u_t \leq \underline{\mathbf{u}}$. Taking into account such control limits requires a careful design of the backward pass as it is required to minimize the approximate state-action value function subject to the constraints. To find a closed-form solution to the constrained problem for all $\delta x_t$, we use the projected Newton-based approach proposed in~\cite{tassa2014control}, where the control gains are found by solving a quadratic program.

In the next step, the nominal trajectories have to be reconstructed using the DR-DDP policy pair $(\bar{\pi}^*, \bar{\gamma}^*)$ to update the quadratically approximated models, which is performed during the forward pass introduced in what follows.

\subsubsection{Forward Pass}\label{sec:forward_pass}
In the original DDP algorithm, the forward pass is performed by executing the control policy to the system. However, due to the disturbance term in the system dynamics and lack of knowledge about its true distribution, it is not trivial to perform forward rollouts for the ambiguous stochastic  system~\eqref{dyn}. Instead, we choose to execute the control and distribution policy pair $(\bar{\pi}^*, \bar{\gamma}^*)$  in the following manner. 
First, using~\eqref{aff_cont_pol} and \eqref{wc_dist_}, we construct a control input $u_t = \bm{\bar{u}}_t + \alpha k_t + K_t (x_t - \bm{\bar{x}}_t)$ and sample a disturbance  realization as $w_t \sim  \frac{1}{N}\sum_{i=1}^{N} \delta_{\bm{\bar{w}}_t + \alpha h_t^{(i)} + H_t (x_t - \bm{\bar{x}}_t)}$, where $\alpha \in (0,1)$ is a line-search parameter.\footnote{Since DDP is a second-order method and potentially takes  large steps, regularization is required to prevent the blow-up of the value. Therefore, we multiply $k_t$ and $h_t^{(i)}$ by scaling a parameter $\alpha\in(0,1)$ and perform a line-search.  In particular, the line-search parameter alpha is iteratively reduced to improve the total cost.} 
Then, both the control input and the disturbance sample are executed to the system for $t=0,\dots, T-1$ starting from the initial state $x_0$.\footnote{The complexity of a single iteration of our algorithm is bounded by $O\big(T(n_x^3  + n_u^3+ (N +n_w)n_w^2  )\big)$, which is polynomial in state,  input and disturbance dimensions and linear in the time horizon and sample size.}

\begin{algorithm}[t]
\DontPrintSemicolon
\caption{DR-DDP algorithm}
\label{alg:dr_ddp}
\SetKw{Input}{Input:}
\SetKw{to}{to}
\SetKw{and}{and}
\SetKw{return}{return}
\Input{$x_0, \pi_\mathrm{init}, \gamma_\mathrm{init}, T, \lambda$}\;
Apply $(\pi_\mathrm{init},\gamma_\mathrm{init})$ to generate $(\bar{\bm{x}}_\mathrm{nom}, \bar{\bm{u}}_\mathrm{nom}, \bar{\bm{w}}_\mathrm{nom})$\;
\While{not converged}{
    \tcp{Backward Pass}
    $\bm{V}_{T} \gets \ell_f(\bar{\bm{x}}_T),  V_{T,x} \gets \ell_{f,x},  V_{T,xx} \gets \ell_{f,xx}$\;
    \For{$t=T-1$ \to $0$}{
Construct $(\bar{\pi}_t^*, \bar{\gamma}_t^*)$ using~\eqref{aff_cont_pol} and~\eqref{wc_dist_}\;
    Update $\bm{V}_t, V_{t,x}, V_{t,xx}$ according to~\eqref{quad_approx}\;
    }
    \tcp{Forward Pass}
  	 Perform line-search to update $\alpha$\;
    \For{$t=0$ \to $T-1$}{
Compute $u_t = \bar{\bm{u}}_t + \alpha k_t + K_t(x_t - \bar{\bm{x}}_t)$\;
    Sample $w_t \sim  \frac{1}{N}\sum_{i=1}^{N} \delta_{\bm{\bar{w}}_t + \alpha h_t^{(i)} + H_t (x_t - \bm{\bar{x}}_t)}$\;
Execute $u_t$ and $w_t$ to~\eqref{dyn} and observe $x_{t+1}$\;
    }
    $ \bar{\bm{x}}_\mathrm{nom}  \gets x_{0:T}, \bar{\bm{u}}_\mathrm{nom} \gets u_{0:T-1}, \bar{\bm{w}}_\mathrm{nom}  \gets  w_{0:T-1}$
}
\return $(\bar{\pi}^*, \bar{\gamma}^*)$
\end{algorithm}

\section{Numerical Experiments}\label{sec:exp}

In this section, we compare the empirical performance of our DR-DDP method with two baseline algorithms:  \emph{GT-DDP}~\cite{sun2018min}, which uses a minimax approach to consider the worst-case disturbances, and  \emph{box-DDP}~\cite{tassa2014control}, a deterministic DDP algorithm that ignores uncertainties in the controller design but considers box constraints on control inputs.

In our experiments, we choose the penalty parameter $\lambda$ to minimize the cost upper bound~\eqref{guarantee} for $\theta = 0.1$ under the DR-DDP policy pair $(\bar{\pi}^*, \bar{\gamma}^*)$. We estimate the upper bound by conducting 1,000 independent Monte Carlo simulations and computing the Wasserstein distance via a linear program. The optimal penalty parameter is then  found via numerical optimization. All simulations were performed on a PC with a 3.70 GHz Intel Core i7-8700K processor and 32 GB RAM. The source code of our DR-DDP implementation is available online. \footnote{\href{https://github.com/CORE-SNU/DR-DDP}{\tt https://github.com/CORE-SNU/DR-DDP}}

\subsection{Kinematic Car Navigation}

In the first experiment, we consider an autonomous navigation task for a kinematic car in an intersection where a randomly moving obstacle obstructs navigation. 
Consider the following kinematic vehicle model:
\[
x_{t+1} = \begin{bmatrix}x_{t+1}^\mathrm{car} \\ \bm{p}_{t+1}^\mathrm{obs}\end{bmatrix} = \begin{bmatrix}f_\mathrm{car}(x_t^\mathrm{car}, u_t)\\ \bm{p}_{t}^\mathrm{obs} + \Delta \bm{p}_{t}^\mathrm{obs}+ w_t\end{bmatrix}
\]
with system state $x_t \in \mathbb{R}^{5}$ and control input $u_t\in\mathbb{R}^2$. Here, $x_t^\mathrm{car}\in\mathbb{R}^{3}$ represents the car's state evolving according to the differential-drive kinematics $f_\mathrm{car}:\mathbb{R}^{3}\times \mathbb{R}^2\to\mathbb{R}^3$ and consists of the car's center position $\bm{p}$ and its heading angle $\phi$. The control input vector comprises  the velocity and steering angle of the car and has a lower limit of $\underline{\mathbf{u}} = [0, -0.6]^\top$ and an upper limit of $\overline{\mathbf{u}} = [10, 0.6]^\top$. The state component $\bm{p}^\mathrm{obs}_t$ represents the position vector of a random circular obstacle with radius $r_\mathrm{obs}=0.2$.  It is assumed that in each time instance, the obstacle has a pre-specified deterministic motion represented by $\Delta \bm{p}_t^\mathrm{obs} \in \mathbb{R}^{2}$, which is obstructed with a positional disturbance vector $w_t\in\mathbb{R}^{2}$. Each component of the disturbances follows a uniform distribution $\mathcal{U}(-0.001, 0.001)$. Our DR-DDP algorithm uses only $N=10$ samples drawn from the true distribution. The goal is to safely pass the intersection by tracking the given reference trajectory $x^\mathrm{ref}$ and avoiding the obstacle in $T=800$ steps. For this purpose, we design the cost function as 
\[
\begin{split}
\ell_t(x,u) := \, &\|x^\mathrm{car} - x_t^\mathrm{ref}\|_Q^2 + \|u\|_R^2 \\
&+ Q_\mathrm{obs} \exp\left(-0.5\frac{\|\bm{p} - \bm{p}^\mathrm{obs}\|^2}{(r_\mathrm{obs}+r_\mathrm{safe})^2}\right),
\end{split}
\]
where the last term is a soft constraint for avoiding the obstacle with a safe margin of $r_\mathrm{safe}=0.2$. The weights are chosen as $Q = 10I, R=0.1I$ and $Q_\mathrm{obs} = 20$. The terminal cost is similar to the running cost with no control cost.  The penalty parameter is set to $\lambda=9000$, which is found as the minimizer of the upper bound in~\eqref{guarantee}. 

\begin{figure}[t]
    \centering
    \includegraphics[width=0.8\linewidth]{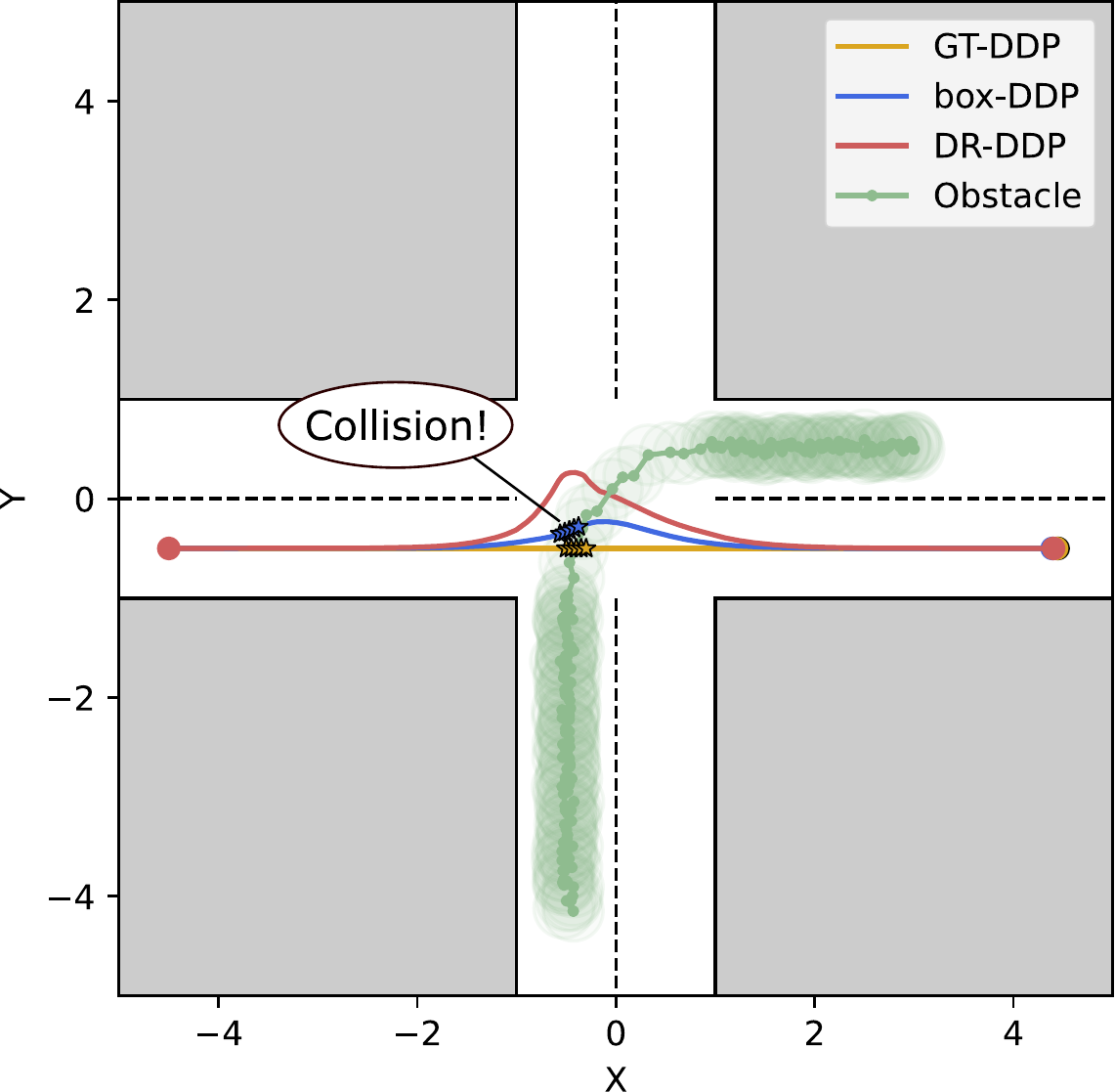}
    \caption{Trajectories of the kinematic car, controlled by GT-DDP, box-DDP, and DR-DDP, in the presence of a randomly moving obstacle. Star marks represent collisions.}
    \label{fig:car}
        \vspace{-0.15in}
\end{figure}

Fig.~\ref{fig:car} shows the trajectories of the kinematic car for a single realization of the disturbances, while Table~\ref{tab:res} summarizes the computation time required for each algorithm. Only DR-DDP  successfully avoids the obstacle and accomplishes the task. Even though box-DDP drives the car away from the reference path, it results in a collision due to its inability to handle uncertainties. Meanwhile, GT-DDP is overly conservative, inducing extremely high costs, thus resulting in a collision.   Despite the distinct behaviors exhibited by the algorithms, the average total times for DR-DDP and GT-DDP are quite similar (less than $25\, sec.$), indicating their comparable computational efficiency.  To validate our results, we conducted 1,000 independent simulation runs to measure the \emph{out-of-sample performance} of each method, which are reported in Table~\ref{tab:res}.\footnote{The out-of-sample performance of the controller 
is defined as $\mathbb{E}^{w_t\sim\mathbb{Q}_t^\mathrm{true}}[\ell_f(x_T) + \sum_{t=0}^{T-1} \ell(x_t, \pi_t^*(x_t))]$, which is evaluated  using 10,000 disturbance samples drawn from the true distribution $\mathbb{Q}_t^\mathrm{true}$ and averaged over 200 simulations. It represents the expected total cost under a new disturbance sample generated according to the true disturbance distribution $\mathbb{Q}_t^\mathrm{true}$ independent of the sample dataset used in DR-DDP.}
The proposed DR-DDP algorithm achieves an out-of-sample cost as low as 172.805, while both box-DDP and GT-DDP demonstrate worse out-of-sample performance  costs of 235.248 and 201.726, respectively. These findings demonstrate the effectiveness of our algorithm in addressing distributional uncertainties in nonlinear stochastic systems.

\begin{table}[t]
\caption{Out-of-sample cost, total computation time, and average computation time per iteration for all algorithms computed over 1,000 simulations.}
\centering
\setlength{\tabcolsep}{0.2em} 
\begin{tabular}{>{\raggedright}m{2cm} >{\centering}m{2cm} >{\centering}m{2cm}  >{\centering\arraybackslash}m{2cm}}
\toprule
& DR-DDP & GT-DDP & box-DDP \\
\midrule
\textbf{Out-of-sample cost} & $172.805$  & $201.726$ & $235.248$\\
\midrule
\textbf{Total comp. time (sec.)} & $24.696$ & $23.587$ & $8.532$\\
\midrule
\textbf{Comp. time per. iter. (sec.)} & $0.215$ & $0.342$  & $0.086$  \\
\bottomrule
\end{tabular}
\label{tab:res}
\vspace{-0.1in}
\end{table}

\subsection{Synchronization of Coupled Oscillators}

In the second experiment, we demonstrate the scalability of our algorithm through a synchronization problem with $L$ coupled noisy oscillators using the following discrete-time Kuramoto model~\cite{kuramoto2012chemical}:
\[
\theta_{t+1}^{(i)} = \theta_{t}^{(i)} + \Delta t \Big(\omega_i + \mathcal{K}  u_t \sum_{j=1}^{L}\sin (\theta_t^{(j)} - \theta_t^{(i)})\Big) + w_t^{(i)},
\]
where $i=1,\dots, L$. Here, $x_t = [\theta_t^{(1)},\dots, \theta_t^{(L)}]^\top \in\mathbb{R}^{L}$ is the system state, and $u_t\in\mathbb{R}$ is the control input. For each $i$-th oscillator, $\theta_t^{(i)}$ represents its phase, $\omega^{(i)}$ is its natural frequency, $\mathcal{K}$ is the coupling strength, and $\Delta t=0.03 \, sec.$ is the discretization step. We assume the frequencies $\omega^{(i)}$ and disturbances $w_t^{(i)}$  follow Gaussian distributions $\mathcal{N}(0,0.004)$ and $\mathcal{N}(0.001, 0.001)$, respectively. We aim to synchronize the oscillators within a finite horizon of $T=100$, assuming that only $N=50$ disturbance samples are available. For that, the cost function is designed as
\[
\ell(x_t,u_t):= \sum_{i,j=1}^{L} \sin^2 (\theta_t^{(j)}-\theta_t^{(i)})+ 0.0001 u_t^2,
\]
and the penalty parameter is chosen as $\lambda=10000$ to minimize the upper bound~\eqref{guarantee}.

To assess the scalability of our method, we evaluate the computation time to perform a single iteration of our DR-DDP algorithm depending on the number of oscillators.  The computation times required for our method and two baselines, as well as the corresponding total costs, are reported in Fig.~\ref{fig:osc}. As expected, the computation time increases with the number of oscillators. However, consistent with the theoretical complexity, the computation time grows as a polynomial function of the state dimension, showing the superiority of our method over the DP algorithm. Notably, the computation time required to perform a single iteration of DR-DDP is almost identical to the computation times required by box-DDP and GT-DDP. Furthermore, our DR-DDP algorithm consistently returns the lowest out-of-sample cost for any number of oscillators considered, successfully synchronizing the oscillators despite the disturbances.

\begin{figure}[t]
	 
     \begin{subfigure}[b]{0.478\linewidth}
         \centering
    		\includegraphics[width=\linewidth]{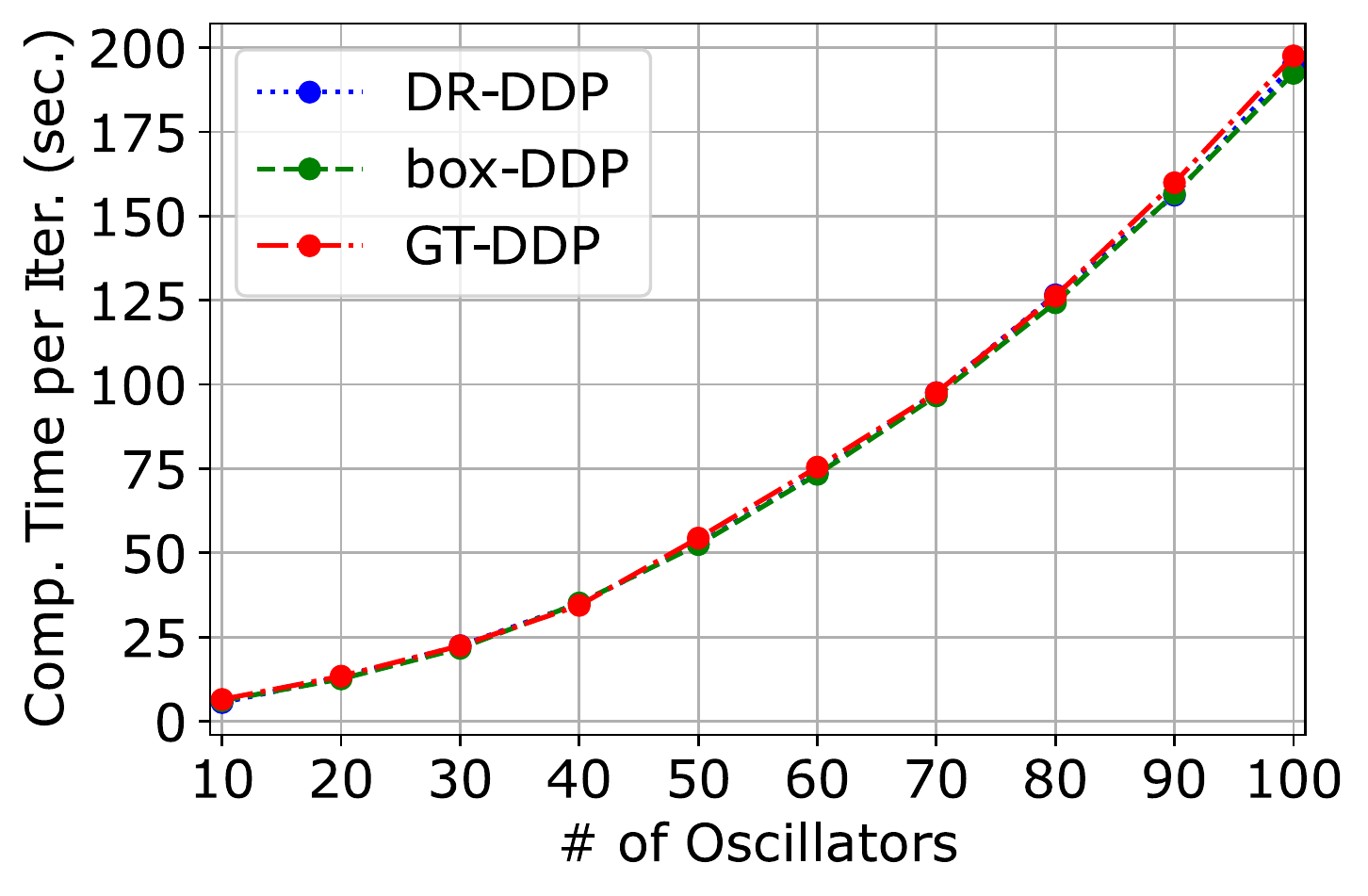}
    		  \caption{ }
    		 \label{fig:comp_time}
     \end{subfigure}
     \hfill
     \begin{subfigure}[b]{0.475\linewidth}
         \centering
         \includegraphics[width=\linewidth]{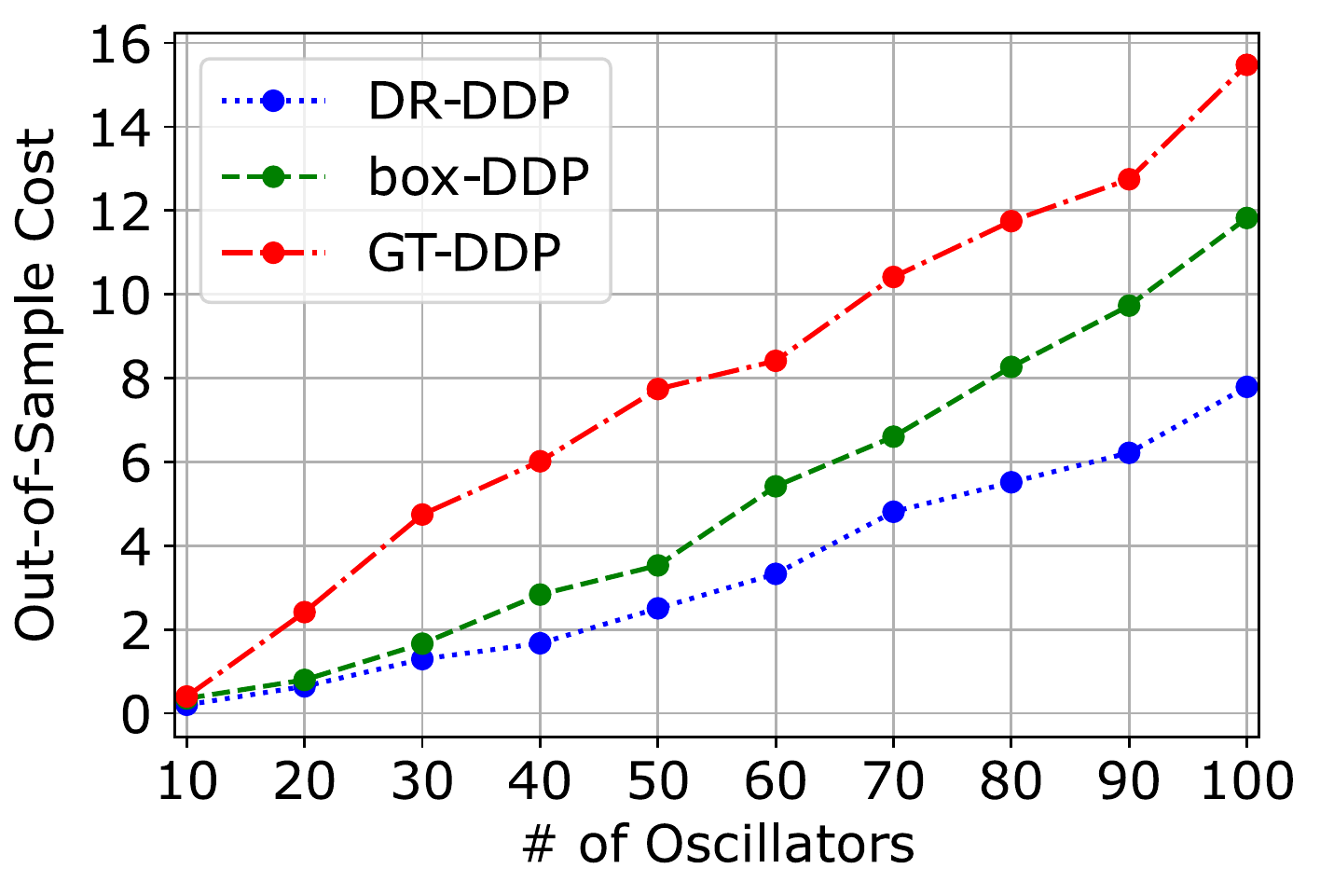}
         \caption{ }
         \label{fig:osp_cost}
     \end{subfigure}
     \caption{(a) Computation time per iteration (in seconds) and (b) out-of-sample cost depending on the number of oscillators calculated with 1,000 simulations.}
     \label{fig:osc}
\end{figure}

\section{Conclusions}
In this work, we have proposed a  practical  DR-DDP algorithm for solving nonlinear stochastic optimal control problems with unknown disturbance distributions. Our approach leverages WDRC to address
limited distributional information. We reformulated the quadratic approximation of  value functions for WDRC using the Kantorovich duality principle and then solved it in a DDP fashion to obtain closed-form expressions of the distributionally robust control and distribution policies in each iteration. Our simulation results demonstrate the superior out-of-sample performance of the proposed method compared to existing DDP methods, as well as its outstanding scalability to high-dimensional state spaces.
In the future, we plan to investigate the theoretical properties of our algorithm, including its convergence rate and performance guarantees.

\appendices

\section{Proof of Proposition~\ref{prop:gc}} \label{app:gc}
\begin{proof}
The proof is based on the arguments used in \cite[Lemma~4.1]{kim2022minimax} for the LQ case. Specifically, fix $\lambda > 0$. For any $\varepsilon > 0$, there exists $\gamma^\varepsilon \in \Gamma_{\mathbb{D}}$ such that
\[
\sup_{\gamma \in \Gamma_{\mathbb{D}}} J(\pi, \gamma) - \epsilon < J(\pi, \gamma^\epsilon).
\]
Since  $\gamma^\epsilon \in \Gamma_{\mathbb{D}}$, it follows that $\gamma_t^\varepsilon(x_t) = \mathbb{P}_t \in \mathbb{D}_t$. Thus,
\begin{equation}\nonumber
\begin{split}
J (\pi, \gamma^\epsilon) &\leq \lambda T \theta^2 + J_\lambda (\pi, \gamma^\epsilon)\\
& \leq \lambda T \theta^2 + \sup_{\gamma \in \Gamma} J_\lambda (\pi, \gamma). 
\end{split}
\end{equation}
Since this inequality holds for any $\varepsilon >0$, we conclude that~\eqref{guarantee} holds.
\end{proof}

\section{Proof of Proposition~\ref{prop:vf}} \label{app:vf}
\begin{proof}
We first note that by the definition of the Wasserstein distance, the inner supremum in~\eqref{vf} is equivalent to
\begin{equation}\label{V_wass}
\begin{split}
& \sup_{\mathbb{P} \in \mathcal{P}(\mathbb{R}^{n_w})}   \mathbb{E}^{w \sim \mathbb{P}}  \bigg[ V_{t+1}(f(\bm{x},\bm{u},w))
- \lambda W_2(\mathbb{P},\mathbb{Q}_t)^2\bigg] \\
= \, &  \sup_{\mathbb{P} \in \mathcal{P}(\mathbb{R}^{n_w})}  \int_{\mathcal{W}} V_{t+1}(f(\bm{x},\bm{u},w)) \mathrm{d} \mathbb{P}(w) \\
&- \lambda \inf_{\substack{\tau \in \mathcal{P}(\mathcal{W}^2):\\ \Pi^1 \tau = \mathbb{P}, \Pi^2 \tau = \mathbb{Q}_t}} \int_{\mathcal{W}^2}\|w-\hat{w}\|^2 \, \mathrm{d}\tau(w, \hat{w})\\
= \, & \sup_{\substack{\tau \in \mathcal{P}(\mathcal{W}^2):\\ \Pi^2 \tau = \mathbb{Q}_t}}  \int_{\mathcal{W}^2} \Big[ V_{t+1}(f(\bm{x},\bm{u},w))  - \lambda \|w-\hat{w}\|^2 \Big] \, \mathrm{d}\tau(w, \hat{w})
\end{split}
\end{equation}

According to the Kantorovich duality principle~\cite{mohajerin2018data, gao2022distributionally},
\[
W_2(\mathbb{P}, \mathbb{Q})^2 = \sup_{\varphi, \psi \in \Phi} \left[\int_\mathcal{W} \varphi(w) \mathrm{d} \mathbb{P}(w) + \int_\mathcal{W} \psi(\hat{w}) \mathrm{d} \mathbb{Q}_t(\hat{w}) \right],
\]
where $\Phi:= \{ (\varphi, \psi) \in L^1(\mathrm{d} w) \times L^2 (\mathrm{d}\hat{w}) \mid \varphi(w) + \psi(\hat{w}) \leq \| w - \hat{w}\|^2, \forall w, \hat{w} \in \mathcal{W}\}$.
Thus, for any $(\varphi, \psi)\in\Phi$, we have that
\[
\psi(\hat{w}) \leq \inf_{\bm{w} \in \mathcal{W}} \| \hat{w} -\bm{w} \|^2 - \varphi(\bm{w})
\]
for each $\hat{w}\in\mathcal{W}$. Consequently, for any $\lambda > 0$, \emph{weak duality} holds for the inner problem as follows:
\begin{equation}\label{weak_dual}
\begin{split}
&\sup_{\substack{\tau \in \mathcal{P}(\mathcal{W}^2):\\ \Pi^2 \tau = \mathbb{Q}_t}}  \int_{\mathcal{W}^2} \Big[ V_{t+1}(f(\bm{x},\bm{u},w))  - \lambda \|\hat{w}-w\|^2 \Big] \, \mathrm{d}\tau(w, \hat{w}) \\
&\leq  \int_{\mathcal{W}} \sup_{\bm{w} \in \mathcal{W}} \Big[ V_{t+1}(f(\bm{x},\bm{u},\bm{w}))  - \lambda \|\hat{w}-\bm{w}\|^2 \Big] \, \mathrm{d}\mathbb{Q}_t(w').
\end{split}
\end{equation}

Using Proposition 1 in~\cite{sinha2017certifying}, we can further show that strong duality holds for the inner problem for any $\lambda > 0$. We conclude the proof by substituting the expression for the inner supremum into~\eqref{vf}.
\end{proof}

\bibliographystyle{IEEEtran}
\bibliography{reference}

\end{document}